\documentclass[conference]{IEEEtran}
\IEEEoverridecommandlockouts

\usepackage{cite}
\usepackage{amsmath,amssymb,amsfonts}
\usepackage{algorithmic}
\usepackage{graphicx}
\usepackage{textcomp}
\usepackage{xcolor}
\usepackage{amsthm}
\def\BibTeX{{\rm B\kern-.05em{\sc i\kern-.025em b}\kern-.08em
    T\kern-.1667em\lower.7ex\hbox{E}\kern-.125emX}}

\allowdisplaybreaks
\newtheorem{theorem}{Theorem}
\newtheorem{lemma}{Lemma}
\newtheorem{corollary}{Corollary}

\definecolor{newred}{HTML}{ff382e}

\def\B{\mathbb{B}}

\def\F{\mathbb{F}}

\title{RPA Decoding with Few Subspace Projections for Correcting Structured Error Patterns in RM Codes}
\author{%
  \IEEEauthorblockN{Shiv Pratap Singh Rathore \, and \, Navin Kashyap}
  \IEEEauthorblockA{Department of Electrical Communication Engineering \\
                   Indian Institute of Science \\
                    Email: \{shivprataps,nkashyap\}@iisc.ac.in}
}
\date{}

\begin{document}

\maketitle

\begin{abstract}
The Recursive Projection Aggregation (RPA) decoder for Reed-Muller codes uses, in its default implementation, all one-dimensional subspaces at each level of recursion. We restrict our attention to the case where the number of errors is $t<d_{\mathrm{min}}/2$. It is known that if the number of errors is $t<d_{\mathrm{min}}/2$, then $2t+1$ subspaces at each level of recursion are sufficient to correct these errors. We show that, for structured error patterns, the number of subspaces required for correction can be reduced further by appropriately selecting the subspaces used for projection. In particular, when errors occur as a single burst of length $t < d_{\mathrm{min}}/2$, three subspaces at each level of recursion suffice for correction. We obtain similar results for multiple bursts and other structured error patterns as well.
\end{abstract}

\section{Introduction}
Reed-Muller (RM) codes are among the simplest and oldest families of error-correcting codes, introduced by Muller in 1954 \cite{muller1954application}. In recent years, RM codes have attracted renewed interest owing to theoretical progress demonstrating that they achieve capacity over the binary erasure channel (BEC)  and binary-input memoryless symmetric (BMS) channels \cite{kudekar2016reed,reeves2023reed,abbe2023proof}. More recently, they have also been shown to achieve the symmetric capacity of binary-input indecomposable finite-state channels \cite{pfister2026reedmullercodesachievesymmetric}.

Much work has been dedicated to developing practical and efficient decoding algorithms for RM codes. The first decoder was devised by Reed \cite{1057465} and was capable of correcting bit-flip errors up to half the minimum distance. For first-order RM codes, a Fast Hadamard Transform-based (FHT) decoder \cite{be1986optimal} was proposed as an efficient Maximum Likelihood (ML) decoder. Various other decoders have been proposed for RM codes \cite{sidel1992decoding,sakkour2005decoding,dumer2004recursive,dumer2006soft,sberlo2020performance,geiselhart2021automorphism,kamenev2019new,kamenev2021decoding,doan2022decoding}. For more information about RM codes and their decoders, we refer the reader to \cite{abbe2020reed} . In this paper, we focus on a particular decoder introduced by Ye and Abbe known as the Recursive Projection-Aggregation (RPA) decoder \cite{ye2020recursive}, which projects the received vector onto lower-order quotient codes, recursively decodes them, and aggregates the estimates via majority voting. The RPA decoder has been shown to achieve near-maximum likelihood (ML) performance over the binary symmetric channel (BSC) at moderate blocklengths.

The RPA decoder for the $\mathrm{RM}(m,r)$ code uses all $2^m-1$ one-dimensional subspaces for projection and has complexity $O(n^r \log n)$, where $n = 2^m$ is the blocklength of the RM code. In this paper, we focus on the case where the number of errors is less than half the minimum distance. For a code $\mathrm{RM}(m,r)$ with minimum distance $d_{\mathrm{min}} = 2^{m-r}$, it is known that if the number of errors $t$ satisfies $t < \frac{d_{\mathrm{min}}}{2}$, then any arbitrary error pattern of weight $t$ can be corrected using $2t+1$ one-dimensional subspaces within the RPA framework \cite{zhang2025coset}. Using $2t+1$ subspaces instead of all $2^{m}-1$ subspaces reduces the complexity to $O((2t+1)^{r-1} n \log n)$. However, when $t$ is large, the computational complexity can still be substantial.

In many practical communication scenarios, channel errors are not arbitrary but possess specific structures, such as localized burst errors. In this paper, we focus on correcting structured error patterns using fewer subspaces. We show that if the error positions satisfy certain constraints, then even $2t+1$ subspaces are unnecessary, and $t$ errors can be corrected using a constant number of subspaces for projection. Burst errors are a special case of the structured error patterns considered in this paper, and we show that when errors occur as $\beta$ bursts with a total of $t$ errors where $t < \frac{d_{\mathrm{min}}}{2}$, then we can correct them using $2\beta+1$ subspaces, provided that $\beta$ satisfies an upper bound determined by the burst lengths. For example, if errors occur as a single burst of length $t < \frac{d_{\mathrm{min}}}{2}$ , then we can correct them using three subspaces, with complexity $O(3^r \cdot n)$.

\section{Notation and Preliminaries}

\subsection{Notations}
Let $\F_2$ and $\F_2^m$ be the finite field of two elements and the $m$-dimensional vector space over $\F_2$, respectively, where $m$ is a positive integer. We denote modulo-$2$ addition by $\oplus$. A vector $z \in \F_2^m$ is represented as $z=(z_1,\ldots,z_m)$ where $z_j \in \{0,1\}$ for $j=1,\ldots,m$. For two vectors $a=(a_1,\ldots,a_m), \ b=(b_1,\ldots,b_m) \in \F_2^m$, we have $a=b$ iff $a_j=b_j \ \forall j \in \{1,\ldots,m\}$. Also, we define $a \oplus b=(a_j \oplus b_j :  j \in \{1,\ldots,m\})$.

We order the elements of $\mathbb{F}_2^m$ lexicographically. In particular, for $z=(z_1,\ldots,z_m)$ and $z'=(z'_1,\ldots,z'_m)$ in $\mathbb{F}_2^m$, we say that $z$ precedes $z'$ if, at the first coordinate $j$ for which $z_j\neq z'_j$, we have $z_j=0$ and $z'_j=1$. 

For a vector $z=(z_1,\ldots,z_m) \in \F_2^m$, we use $z \langle i \rangle $ to denote the last $i$ coordinates of $z$, i.e., $z\langle i \rangle=(z_{m-i+1},z_{m-i+2},\ldots,z_m)$. For two vectors $z \in \F_2^m, z' \in \F_2^{m'}$, where $m$ need not be equal to $m'$, we say $z\langle i \rangle=z'\langle i \rangle$ if $(z_{m-i+1},z_{m-i+2},\ldots,z_{m})=(z'_{m'-i+1},z'_{m'-i+2},\ldots,z'_{m'})$. 


\subsection{Reed-Muller Codes}                                     
Let $ \F_2[x_1, \ldots, x_m] $ be the ring of polynomials in $ m $ variables over $ \F_2 $. For any polynomial $ f $ and any binary vector $ z = (z_1, \ldots, z_m) \in \F_2^m $, the evaluation of $f$ at $z$ is defined as $ \text{Eval}_z(f) := f(z_1, \ldots, z_m)$.
The evaluation vector $ \text{Eval}(f) $ consists of all the values obtained by evaluating $f$ at every binary vector in $\F_2^m$:
$$
\text{Eval}(f) = \bigl(\text{Eval}_z(f): z \in \F_2^m \bigr).
$$
where the coordinates $z$ are ordered according to the standard lexicographic order.

For integers $0 \leq r \leq m$, the Reed–Muller code of $m$ variables and order $r$, denoted by $ \mathrm{RM}(m, r)$, is defined as the set consisting of the evaluation vectors of all the polynomials in $\F_2[x_1, \ldots, x_m]$ of degree at most $r$:
\begin{equation*}
\mathrm{RM}(m, r) := \left\{ \text{Eval}(f) : \, f \in \mathbb{F}_2[x_1, \ldots, x_m],\ \deg(f) \leq r \right\}.
\end{equation*}
The code $\mathrm{RM}(m,r)$ is of length $2^m$, dimension $\sum\limits_{i=0}^{r} {m \choose i}$ and minimum distance $d_{\mathrm{min}}=2^{m-r}$.

\subsection{Projection and Quotient Code}
Let $\B$ be an $s$-dimensional subspace of $\F_2^m$. The quotient space $\F_2^m / \B$ consists of all the cosets of $\B$ in $\F_2^m$, each coset $T$ being expressible as $ T = z \oplus \B := \{z \oplus b: b \in \B\} $ for some $z \in \F_2^m$.
For a binary vector $v = (v_z: z \in \F_2^m)$, the projection of $v$ onto the cosets of $\B$ is defined as
\begin{equation*}
v_{/\B} := \text{Proj}(v,\B) := (v_{/\B}(T): T \in \F_2^m/\B),
\label{eq:projv}
\end{equation*}
where $v_{/\B}(T) = \bigoplus_{z \in T} v_z$. 

 The coordinate ordering in the projection vector $v_{/\B}$ is prescribed as follows. Fix an $(m-s) \times m$ parity-check matrix, $H$, for $\B$. As is standard, the \emph{syndrome} of a vector $z \in \F_2^m$ is defined as $H z^{\mathsf{T}}$. Vectors belonging to the same coset of $\B$ have the same syndrome. We can then order the cosets of $\B$ according to the lexicographic ordering of their syndromes, and this is the coordinate ordering followed in $v_{/\B}$. For a coset $T$, we use $[T]$ to denote its index in the projection vector.

 When $\B = \{0,b\}$ is a one-dimensional subspace, we use a specific choice of parity-check matrix $H$. Let $j$ be the largest index such that $b_j = 1$, i.e., $b = (b_1,\ldots,b_{j-1},1,0,\ldots,0)$. We then take the $(m-1) \times m$ parity-check matrix to be
$$
H = \begin{bmatrix}
I_{(j-1) \times (j-1)} & \widetilde{b} & \mathbf{O} \\
\mathbf{O} & \mathbf{0} & I_{(m-j) \times (m-j)}
\end{bmatrix}
$$
where $\mathbf{0}$ is a column vector of $(m-j)$ $0$s, $\widetilde{b} = [b_1,\ldots,b_{j-1}]^{\mathsf{T}}$, and the $\mathbf{O}$s denote all-zero matrices of appropriate sizes. Note that $Hz^{\mathsf{T}} = [\widetilde{z_1},\ldots,\widetilde{z}_{j-1},z_{j+1},\ldots,z_{m}]^{\mathsf{T}}$, where $\widetilde{z}_\ell = z_\ell \oplus z_j b_\ell$ for $\ell = 1,\ldots,j-1$. In particular, $H b^{\mathsf{T}} = [0,\ldots,0]^{\mathsf{T}}$.

For example, let $v=(v_{000},v_{001},\ldots,v_{111})$ be a binary vector indexed by $\F_2^3$ and consider the one-dimensional subspace $\B=\{000,110\}$. Then, $Hz^{\mathsf{T}}= [z_1 \oplus z_2, z_3]^{\mathsf{T}}$. Consider coset $111 \oplus \B=\{001,111\}$. For $z=111$, $Hz^{\mathsf{T}}=[0,1]^{\mathsf{T}}$, thus the index of coset $111 \oplus \B$ in $v_{/\B}$ is given by $[111 \oplus \B]=01$. The projection vector is $v_{/\B}=(v_{000} \oplus v_{110},v_{001} \oplus v_{111}, v_{010} \oplus v_{100}, v_{011} \oplus v_{101})$. 

The \emph{quotient code} of $\mathrm{RM}(m,r)$ on the $s$-dimensional subspace $\B$, denoted by $Q(m,r,\B)$, is the set consisting of projections of all the codewords in $\mathrm{RM}(m,r)$, i.e., $Q(m,r,\B)=\{v_{/\B}: v \in \mathrm{RM}(m,r)\}$. It is known that $Q(m,r,\B)$ is simply $\mathrm{RM}(m-s,r-s)$ \cite{abbe2020reed}. Thus, for $v \in \text{RM}(m,r)$, $v_{/\B} \in \text{RM}(m-s,r-s)$. The RPA decoder is based on this fact \cite{ye2020recursive}.

\subsection{Recursive Projection Aggregation (RPA) decoder}
We consider the RPA decoder over a binary symmetric channel (BSC). Consider transmitting $c \in \mathrm{RM}(m,r)$. Let $y=(y(z), \ z \in \F_2^m)$ be the received vector. Let $\mathcal{S}$ be a subset of all $s$-dimensional subspaces of $\F_2^m$. The RPA decoder works iteratively with each iteration consisting of three steps: 

\begin{enumerate}
    \item \textbf{Projection:} For every $s$-dimensional subspace $\B \in \mathcal{S}$, calculate the projection $y_{/\B}$. The projected vector $y_{/\B}$ is a corrupted version of the quotient codeword $c_{/\B}\in Q(m,r,\B)$.

    \item \textbf{Decoding of quotient codes:} Decode $y_{/\B}$ for every $\B$ recursively using RPA decoder for the corresponding quotient code $Q(m,r,\B)$. Let $\hat{y}_{/\B}$ denote the resulting estimate of $c_{/\B}$.

    \item \textbf{Aggregation/Voting:} For each $z\in\F_2^m$, define $\widehat{y}(z,\B)=\hat{y}_{/\B}([z \oplus \B])\oplus y_{/\B}([z\oplus \B])\oplus y(z)$, where $[z\oplus \B]$ denotes the index of the coset $z\oplus\B$ in the projection vector. The estimates from the different subspaces are then combined by majority voting:
   $$\hat{y}(z)=\boldsymbol{1}\left\{\sum_\B \widehat{y}(z,\B)>\frac{h}{2}\right\},\qquad z\in\mathbb{F}_2^m.$$
   Here, $h$ denotes $|\mathcal{S}|$, i.e., the number of $s$-dimensional subspaces used for decoding.

\end{enumerate}

We only consider one-dimensional RPA (RPA-1D), where the received vector is projected onto one-dimensional subspaces in $\mathcal{S}$, with $\mathcal{S}$ being a subset of all $2^m-1$ one-dimensional subspaces of $\F_2^m$. Then, the quotient code $Q(m,r,\B)$ is $\mathrm{RM}(m-1,r-1)$. For $r=2$, the quotient code is a first-order Reed-Muller code and is decoded using the FHT decoder, which serves as the base case of the recursion. For $r>2$, the quotient code is decoded recursively using RPA-1D. Thus, at each recursive level, both the number of variables and the order are reduced by one until the first-order RM code is reached, where the FHT decoder is used. 

After one iteration, if $\hat{y}=y$, the decoder terminates and outputs $\hat{c}=\hat{y}$. Otherwise, $y$ is replaced by $\hat{y}$, and the next iteration is performed. The number of iterations is limited to $N_{\max}$ to avoid an infinite loop. In this paper, we fix $N_{\max}=1$. In this case, after a single iteration, the decoder outputs $\hat{c}=\hat{y}$.

\section{Subspace Selection}
Here we consider the case where the number of errors is $t$ and $t < \frac{d_{\mathrm{min}}}{2}$. It is known that if the number of errors is $t < \frac{d_{\mathrm{min}}}{2}$, then such errors can be corrected by the RPA decoder using $2t+1$ one-dimensional subspaces \cite{zhang2025coset}. We show that if errors follow specific error patterns, then we can correct them using fewer subspaces. 

We first prove a lemma that will be used later.

\begin{lemma} \label{lem:projvector}
    Consider a binary vector $v=(v(z): z \in \F_2^m)$ and a one-dimensional subspace $\B=\{0,b\}$, where the last $i$ coordinates of $b$ are zero, i.e., $b=(b_1,\ldots,b_{m-i},0,\ldots,0)\in \F_2^m$ for some integer $i$ satisfying $0 \leq i \leq m-1$. Then,
    \begin{enumerate}
        \item[(i)] For every coset $T=z \oplus \B=\{z,z \oplus b\}$, the last $i$ coordinates of $z$ and $z \oplus b$ are the same, i.e., $z\langle i \rangle=(z \oplus b)\langle i \rangle$.
        \item[(ii)] If $[z\oplus\B] \in \F_2^{m-1}$ denotes the index of the coset $z\oplus\B$ in the projection vector $v_{/\B}$, then the last $i$ coordinates of $z$ and $[z \oplus \B]$ are the same, i.e., $z\langle i \rangle=[z \oplus \B]\langle i \rangle$
    \end{enumerate}
\end{lemma}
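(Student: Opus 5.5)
Part (i) follows directly from coordinate-wise addition. Since $b=(b_1,\ldots,b_{m-i},0,\ldots,0)$, for each $\ell\in\{m-i+1,\ldots,m\}$ we have $(z\oplus b)_\ell = z_\ell\oplus b_\ell = z_\ell\oplus 0 = z_\ell$. Hence $(z\oplus b)\langle i\rangle = z\langle i\rangle$. The case $i=0$ is vacuous.

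For part (ii) I will use the specific parity-check matrix $H$ fixed for one-dimensional subspaces. Let $j$ be the largest index with $b_j=1$; such a $j$ exists because $b\neq 0$. Since the last $i$ coordinates of $b$ vanish, $j\le m-i$, so $m-j\ge i$. By the formula recorded in the preliminaries,
\[
Hz^{\mathsf{T}}=[\widetilde{z}_1,\ldots,\widetilde{z}_{j-1},z_{j+1},\ldots,z_m]^{\mathsf{T}},
\]
and the coset index $[z\oplus\B]$ is this syndrome, viewed as an element of $\F_2^{m-1}$. The last $m-j$ entries of the index are exactly $z_{j+1},\ldots,z_m$, untouched by the $\widetilde{\cdot}$ modification. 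Because $m-j\ge i$, its last $i$ entries are $z_{m-i+1},\ldots,z_m$, which is precisely $z\langle i\rangle$. This is the definition of $z\langle i\rangle=[z\oplus\B]\langle i\rangle$ for vectors of different lengths.

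Finally, I will check that the index does not depend on the representative. Using $z\oplus b$ instead of $z$ gives the same syndrome, since $Hb^{\mathsf{T}}=0$. This also agrees with part (i), as the last $i$ coordinates of both representatives coincide.

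The only step requiring care is the inequality $j\le m-i$, which guarantees that the last $i$ syndrome positions come from the unmodified block $z_{j+1},\ldots,z_m$ rather than from the entries $\widetilde{z}_\ell$. The rest is routine bookkeeping.
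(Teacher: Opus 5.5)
Your proposal is correct and follows essentially the same route as the paper: part (i) by coordinate-wise addition, and part (ii) via the fixed parity-check matrix $H$, where $j\le m-i$ ensures that the last $i$ syndrome entries come from the unmodified block $z_{j+1},\ldots,z_m$. Your extra check that the index does not depend on the coset representative is a harmless addition that the paper leaves implicit.
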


\begin{proof}

    (i) Since the last $i$ coordinates of $b$ are all zeros, i.e., $b_{m-i+1}=b_{m-i+2}\ldots=b_{m}=0.$ Thus, for any $z \in \F_2^m$, $(z \oplus b)_{j}=z_{j} \oplus b_{j} = z_{j} \ \forall j \in \{m-i+1,m-i+2,\ldots,m\}$. Thus, the last $i$ coordinates of $z$ and $z\oplus b$ are the same. \\

    (ii) For the indexing of cosets in the projection vector $v_{/\B}$, the cosets are ordered according to the lexicographic ordering of their syndromes. We have seen that for a $z \in \F_2^m$, its syndrome is given as $Hz^{\mathsf{T}} = [\widetilde{z_1},\ldots,\widetilde{z}_{j-1},z_{j+1},\ldots,z_{m}]^{\mathsf{T}}$, where $j$ is the largest index such that $b_j=1$ and $\widetilde{z}_\ell = z_\ell \oplus z_j b_\ell$ for $\ell = 1,\ldots,j-1$. Then, the last $m-j$ coordinates of $z$ and the syndrome $Hz^{\mathsf{T}}$ are the same, i.e., $z \langle m-j \rangle=[z \oplus \B] \langle m-j \rangle$. Since the last $i$ coordinates of $b$ are all zeros, we have $j \leq m-i$. Therefore, the last $i$ coordinates of the syndrome $Hz^{\mathsf{T}}$ are the same as the last $i$ coordinates of $z$, i.e., $z\langle i \rangle=[z \oplus \B]\langle i \rangle$.
\end{proof}

 For example, let $v=(v_{0000},v_{0001},\ldots,v_{1111})$ be a binary vector indexed by $\F_2^4$, consider the one-dimensional subspace $\B=\{0000,0100\}$. Now, $Hz^{\mathsf{T}}=[z_1,z_3,z_4]^{\mathsf{T}}$. Thus, $v_{/\B}=(v_{0000} \oplus v_{0100},v_{0001} \oplus v_{0101}, v_{0010} \oplus v_{0110}, v_{0011} \oplus v_{0111}, v_{1000} \oplus v_{1100},v_{1001} \oplus v_{1101}, v_{1010} \oplus v_{1110}, v_{1011} \oplus v_{1111})$. It can be seen that Lemma~\ref{lem:projvector} holds. For example, consider $z=1110 \in \F_2^4$, then $1110 \oplus b=1010$. Then, $z\langle 2 \rangle=(z \oplus b)\langle 2 \rangle$. Also, $Hz^{\mathsf{T}}=[1,1,0]^{\mathsf{T}}$, i.e., $[z \oplus \B]=110$. Thus, $z\langle 2 \rangle=[z \oplus \B]\langle 2 \rangle$.  \\

Let $E=\{e^1,e^2,\ldots, e^t\}$ denote the set of error positions in the received vector, where $e^j \in \F_2^m$ and $e^1,e^2,\ldots,e^t$ are in lexicographic order.

For a fixed $\ell \in \{1,2,\ldots,m-r\}$ and $\gamma \in \F_2^{m-r-\ell}$, we define $$\pi(\gamma)= \#\{e^{j}: e^j \langle m-r-\ell \rangle= \gamma \}.$$
We consider the error patterns where $\pi(\gamma) \leq \beta \ \forall \gamma \in \F_2^{m-r-\ell},$ for some integer $\beta$. We will show that these errors can be corrected by the RPA decoder using $2\beta+1$ one-dimensional subspaces for a suitable choice of subspaces, provided $\beta$ satisfies an upper bound determined by $\ell$. 

Now, let us define the notation that we will be using for the different levels of recursion in the RPA decoder. Let $c \in \mathrm{RM}(m,r)$ be the transmitted codeword and $y$ be the received vector.  We refer to $c$ and $y$ as the 0-th level projections, denoted by $c^{(0)}=c$ and $y^{(0)}=y$. At each recursion level, the one-dimensional subspaces used for projection are chosen from a set of candidate subspaces satisfying the required properties. To simplify the notation, without loss of generality, we fix a path or sequence of one-dimensional subspaces, with one subspace corresponding to each recursion level. The $i$-th level is obtained by recursively projecting the code $\mathrm{RM}(m,r)$ using the first $i$ one-dimensional subspaces along this path, resulting in the quotient code $\mathrm{RM}(m-i,r-i)$. The specific path is not important since our results hold for any choice of subspaces from the set of candidate subspaces. Hence, we omit the path from the notation and let $c^{(i)}$ and $y^{(i)}$ denote the projections of $c$ and $y$, respectively, obtained after $i$ recursive projections along the fixed path. Note that $c^{(i)} \in \mathrm{RM}(m-i,r-i)$. Let $E^{(i)}$ denote the set of error positions at the $i$-th level, i.e., the set of positions $z \in \F_2^{m-i}$ for which $y^{(i)}(z) \neq c^{(i)}(z)$. We denote the number of errors at this level by $t_i=|E^{(i)}|$. For $\gamma \in \F_2^{m-r-\ell}$, we define 
$$ \pi^{(i)}(\gamma) = \# \{e \in E^{(i)} : e \langle m-r-\ell \rangle = \gamma \} $$


In the traditional RPA decoder, $r=1$ is used as the base case, which means that the recursion stops when the quotient code obtained after the projection step is a first-order RM code. The FHT decoder, which has a complexity of $O(n \log n)$, is used to decode the first-order RM code. In this paper, we consider two variants. The first uses $r=0$ as the base case, while the second uses the standard $r=1$ as the base case.
 Consider the variant where we use $r=0$ as the base case. Since the quotient code at this level is a repetition code, we use majority voting. In this case, for decoding first-order RM codes, we project down to $r=0$ using $2\beta+1$ subspaces. The complexity for each of the projection, majority decoding at $r=0$ and aggregation steps is $O \big((2\beta+1) \cdot n \big)$. If $\beta$ is a constant, we can achieve $O(n)$ complexity for decoding the first-order RM code.
 
For a code $\mathrm{RM}(m,r)$, we use one-dimensional subspaces $\B=\{0,b\}$ where $b \in \F_2^m$, $b \ne 0$, is chosen such that its last $m-r-\ell$ coordinates are zero. There are $2^{r+\ell}-1$ nonzero vectors with this property, and each one-dimensional subspace contains exactly one such nonzero vector. Thus, the total number of such valid subspaces is $2^{r+\ell}-1$. At the $i$-th level of projection, the quotient code is $\mathrm{RM}(m-i,r-i)$. Then, the number of nonzero vectors in $\F_2^{m-i}$ with their last $m-r-\ell$ coordinates being zero is $2^{r+\ell-i}-1$. 

If we use $r=0$ as the base case, then the last recursive step projects the code $\mathrm{RM}(m-r+1,1)$ onto $\mathrm{RM}(m-r,0)$. Thus, at least $2^{\ell+1}-1$ valid subspaces are available for projection at any given level. This allows us to select $2 \beta+1 \leq 2^{\ell+1}-1$, which implies we require $\beta \leq 2^{\ell}-1$. If we instead use $r=1$ as the base case, then the last recursive step projects the code $\mathrm{RM}(m-r+2,2)$ onto $\mathrm{RM}(m-r+1,1)$. In this case, at least $2^{\ell+2}-1$ valid subspaces are available for projection at any level, allowing us to select $2 \beta+1 \leq 2^{\ell+2}-1$, meaning $\beta \leq 2^{\ell+1}-1$. Therefore, using $r=1$ as the base case allows our framework to handle a larger value of $\beta$, but at the cost of higher decoding complexity. 

Let us first consider the RPA decoder that uses $r=0$ as the base case.

\begin{theorem} \label{thm:main}
    For a code $\mathrm{RM}(m,r)$, suppose the number of errors is $t < \frac{d_{\mathrm{min}}}{2}$. Let $\pi(\gamma)= \#\{e \in E: e \langle m-r-\ell \rangle= \gamma \}$ for some fixed $\ell \in \{1,2,\ldots,m-r\}$ and $\gamma \in \F_2^{m-r-\ell}$. If $\pi(\gamma) \leq \beta$ holds for all $\gamma\in\F_2^{m-r-\ell}$, where $\beta \le 2^{\ell}-1$ is some fixed integer, then these errors can be corrected by the RPA decoder using $2\beta+1$ one-dimensional subspaces at each level of recursion, with the repetition code ($r=0$) as the base case.
\end{theorem}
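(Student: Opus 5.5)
Throughout, the decoder uses only \emph{valid} subspaces $\B=\{0,b\}$, i.e.\ those whose $b$ has its last $m-r-\ell$ coordinates equal to zero; at each level I fix any $2\beta+1$ of them, which the counting before the theorem guarantees. The plan is induction on $r$: show that the RPA decoder with repetition-code base case, $2\beta+1$ valid subspaces per level, corrects every error pattern with $t<2^{m-r-1}$ and $\pi(\gamma)\le\beta$ for all $\gamma\in\F_2^{m-r-\ell}$. The quantity $m-r$ is preserved under projection, since $\mathrm{RM}(m,r)$ becomes $\mathrm{RM}(m-1,r-1)$. Hence the parameter $m-r-\ell$, the label set $\F_2^{m-r-\ell}$ and the threshold $d_{\min}/2=2^{m-r-1}$ are the same at every level, and the hypotheses have a chance of being inherited.

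\textbf{Step 1 (hypotheses are inherited).} By Lemma~\ref{lem:projvector} with $i=m-r-\ell$, for a valid $\B$ both elements of a coset $T$ share the last $m-r-\ell$ coordinates, and so does the index $[T]$. An erroneous position of $y_{/\B}$ is a coset containing exactly one error of $y$. Hence the map $e\mapsto [e\oplus\B]$ sends the errors of $y$ with label $\gamma$ onto a superset of the errors of $y_{/\B}$ with label $\gamma$. This gives $\pi^{(1)}(\gamma)\le\pi(\gamma)\le\beta$ and $t_1\le t<2^{m-r-1}$. The projected problem on $\mathrm{RM}(m-1,r-1)$ therefore satisfies the theorem's hypotheses with the same $\ell$ and $\beta$. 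Moreover $\ell\le m-r=(m-1)-(r-1)$, and $\beta\le 2^\ell-1$ still holds. By induction, every quotient decoding returns $\hat y_{/\B}=c_{/\B}$. For the base case $r=0$, majority decoding of the repetition code of length $2^m$ corrects $t<2^{m-1}$ errors.

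\textbf{Step 2 (aggregation).} With correct quotient estimates, $\widehat y(z,\B)=c_{/\B}([z\oplus\B])\oplus y_{/\B}([z\oplus\B])\oplus y(z)$ equals $c(z)$ unless the partner $z\oplus b$ is an error position. In that case the vote is flipped relative to $c(z)$. This holds whether or not $z$ itself is in error: if $z$ is an error and $z\oplus b$ is not, the vote is corrected.

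Fix $z$. By Lemma~\ref{lem:projvector}(i), the partner $z\oplus b$ has the same label $\gamma=z\langle m-r-\ell\rangle$. So a wrong vote at $z$ requires $z\oplus b=e$ for an error $e$ with the same label as $z$, that is, $b=z\oplus e$. There are at most $\beta$ such errors $e$, and these are at most $\beta$ values of $b$. Distinct subspaces have distinct $b$, so at most $\beta$ of the $2\beta+1$ chosen subspaces give a wrong vote. At least $\beta+1$ votes are therefore correct, and the majority returns $\hat y(z)=c(z)$ for every $z$. This gives correct decoding after one iteration.

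\textbf{Main obstacle.} The delicate point is Step 1. I must check that the counting $\pi^{(1)}\le\pi$ is genuinely with respect to the same suffix length $m-r-\ell$ in $\F_2^{m-1}$. I must also check that the subspaces valid at level $i$ are again those whose vectors vanish on the last $m-r-\ell$ coordinates of $\F_2^{m-i}$, so that Lemma~\ref{lem:projvector}(ii) applies at each level. Finally, $2\beta+1\le 2^{\ell+1}-1$ valid subspaces must exist down to the last projection $\mathrm{RM}(m-r+1,1)\to\mathrm{RM}(m-r,0)$; this uses $\beta\le 2^\ell-1$. Step 2 itself is essentially the standard $2t+1$ argument of \cite{zhang2025coset}, with $t$ replaced by the local count $\beta$.
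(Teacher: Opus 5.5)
Your proposal is correct and follows essentially the same argument as the paper. Your Step~1 is the paper's Property~1, where by Lemma~\ref{lem:projvector}(ii) the error count and the suffix counts $\pi(\gamma)$ can only decrease under projection by a valid subspace. Your Step~2 is the aggregation step of the paper's Property~2, where by Lemma~\ref{lem:projvector}(i) all partners $z\oplus b^j$ share $z$'s last $m-r-\ell$ coordinates, so at most $\beta$ of the $2\beta+1$ votes are wrong. The only difference is presentational: you fold the paper's forward induction (Property~1) and backward induction (Property~2) over the recursion levels into a single induction on $r$ with $m-r$ held fixed.
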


\begin{proof}
We assume that at all levels of projection, the one-dimensional subspaces $\B=\{0,b\}$ used have the property that the last $(m-r-\ell)$ coordinates of $b$ are zero. Note that at the $i$-th level of projection, the corresponding quotient code is $\mathrm{RM}(m-i,r-i)$ and its minimum distance is $2^{(m-i)-(r-i)}=2^{m-r}$. Thus, at any level of projection, the minimum distance of the quotient code remains $d_{\mathrm{min}}=2^{m-r}$.

To prove the theorem, we establish the following two properties using induction: 
\begin{enumerate}
     \item For the $i$-th level of projection and any valid sequence or path of one-dimensional subspaces, the number of errors $t_i$ is strictly less than $\frac{d_{\mathrm{min}}}{2}$, and $\pi^{(i)}(\gamma) \leq \beta$ for all $\gamma \in \F_2^{m-r-\ell}$. This holds for all $i \in \{0,1,\ldots,r\}$.

    \item All projections at each level $0,1,\ldots,r$ are decoded correctly, i.e., for any valid sequence of one-dimensional subspaces, $\widehat{y}^{(i)}=c^{(i)} \ \forall i \in \{0,1,\ldots,r\}$.
\end{enumerate}

First, we prove Property~1 using forward induction. At the $0$-th level of projection, we are given $t_0=t<\frac{d_{\mathrm{min}}}{2}$ and $\pi^{(0)}(\gamma)=\pi(\gamma)\leq \beta$ for all $\gamma$. Hence, Property~1 holds at the $0$-th level. 

Without loss of generality, we fix an arbitrary path of one-dimensional subspaces used up to the $i$-th level. By the induction hypothesis, assume Property~1 holds at the $i$-th level. We show it holds at the $(i+1)$-th level. Let the number of errors at level $i$ and $(i+1)$ be $t_i$ and $t_{i+1}$, respectively. By the induction hypothesis, $t_i < \frac{d_{\mathrm{min}}}{2}$. Let $\B=\{0,b\}$ be the one-dimensional subspace used for projection from level $i$ to $i+1$, i.e., $y^{(i+1)}=(y^{(i)})_{/\B}$. Since each error position at level $i$ belongs to exactly one coset of $\B$, each error can contribute to at most one coordinate of the projected vector $y^{(i+1)}$. Therefore, $t_{i+1}\leq t_i < \frac{d_{\mathrm{min}}}{2}$. Furthermore, the set of error positions $E^{(i+1)}$ is a subset of $\{[e \oplus \B]: e \in E^{(i)}\}$. Since the last $m-r-\ell$ coordinates of $b$ are zero, Lemma~\ref{lem:projvector}(ii) implies $e \langle m-r-\ell \rangle =[e \oplus \B]\langle m-r-\ell \rangle \ \forall e \in E^{(i)}$. Therefore, $\pi^{(i+1)}(\gamma) \leq \pi^{(i)}(\gamma)$. By the induction hypothesis, $\pi^{(i)}(\gamma) \leq \beta$, and consequently $\pi^{(i+1)}(\gamma) \leq \beta.$ This proves Property~1.

Next, we prove Property~2 using backward induction on the levels from $r$ down to $0$. The base case is the $r$-th level of projection, where the quotient code is the repetition code $\mathrm{RM}(m-r,0)$. From Property~1, the number of errors $t_r$ is less than $d_{\mathrm{min}}/2=2^{m-r-1}$. Since $\mathrm{RM}(m-r,0)$ is a repetition code with blocklength $2^{m-r}$ and the number of errors is strictly less than half the blocklength, all errors can be corrected using majority voting. Thus, $\widehat{y}^{(r)}=c^{(r)}$.

By the induction hypothesis, assume that all child projections from level $i,i+1,\ldots,r$ are decoded correctly. We show that the projection at the $(i-1)$-th level is also decoded correctly. Let $\B^1,\ldots,\B^{2\beta+1}$ be the set of $2\beta+1$ subspaces used for the projection from level $i-1$ to level $i$, where $\B^j=\{0,b^j\}$.

Consider a non-erroneous position $z$ in $y^{(i-1)}$, i.e., $y^{(i-1)}(z)=c^{(i-1)}(z)$. For subspace $\B^j=\{0,b^j \}$, coset $z \oplus \B^j$ contains $z$ and $z \oplus b^j$. If $z \oplus b^j \not \in E^{(i-1)}$, then the projected coordinate $y^{(i)}([z \oplus \B^j])$ is not in error. By the induction hypothesis, it is decoded correctly at level $i$, so $\widehat{y}^{(i)}[z \oplus \B^j] = y^{(i)}([z \oplus \B^j])$. Thus, $\widehat{y}^{(i-1)}(z,\B^j)=y^{(i-1)}(z)=c^{(i-1)}(z)$, i.e., a correct estimate is obtained from subspace $\B^j$. 

Now consider an erroneous position $z \in E^{(i-1)}$. If $z \oplus b^j \not \in E^{(i-1)}$, then the projected coordinate $y^{(i)}([z \oplus \B^j])$ is in error. By the induction hypothesis, this projection at level $i$ is decoded correctly, thus $\widehat{y}^{(i)}[z \oplus \B^j] \neq y^{(i)}([z \oplus \B^j])$. Therefore,  $\widehat{y}^{(i-1)}(z,\B^j)=y^{(i-1)}(z) \oplus 1=c^{(i-1)}(z)$ , i.e., a correct estimate is obtained from subspace $\B^j$.

Thus, whether a position $z$ is erroneous or not, as long as it forms a coset with a non-erroneous position, we obtain a correct estimate for $\widehat{y}^{(i-1)}(z)$ from subspace $\B^j$. If $h=2\beta+1$ is the total number of subspaces used, then the final estimate is determined by majority voting: $\widehat{y}^{(i-1)}(z)=\mathbf{1}\left\{\sum \widehat{y}^{(i-1)}(z,\B^j)>\frac{h}{2}\right\}$. Therefore, $\widehat{y}^{(i-1)}(z)=c^{(i-1)}(z)$ if no more than $\lfloor h/2 \rfloor = \beta$ of the cosets $z \oplus \B^j$ have the position $z \oplus b^j$ being in error. Indeed, at most $\beta$ of the positions $z \oplus b^j$ can be in error. To see why, observe that, by choice of the subspaces $\B^j$, the last $(m-r-\ell)$ coordinates of $b^j$ are all $0$s. Hence, by Lemma~\ref{lem:projvector}(i), we have $(z \oplus b^j) \langle m-r-\ell \rangle = z \langle m-r-\ell \rangle =: \gamma_z$. From Property~1, $\pi^{(i-1)}(\gamma_z) \leq \beta$. Hence, at most $\beta$ of the positions $z\oplus b^j$ are in error, which proves Property~2.


Property~2 shows, in particular, that the projection at level 0, i.e., $y$, is decoded correctly, which proves the theorem.
\end{proof}

The complexity of the decoder described in Theorem~\ref{thm:main} is $O\big((2\beta+1)^r \cdot n\big)$. Note that a special case of the error patterns of interest is burst errors. If there is a burst of length up to $2^{m-r-\ell}$, then the error positions can span across at most $2^{m-r-\ell}$ consecutive bits, therefore the last $m-r-\ell$ coordinates of all error positions will be distinct. In other words, $\pi(\gamma) \leq 1$ for all $\gamma \in \F_2^{m-r-\ell}$. Thus, a single burst can contribute at most one to each $\pi(\gamma)$ and, consequently, for $\beta$ bursts, $\pi(\gamma) \leq \beta \ \forall \gamma$. Thus, we get the following corollary:

\begin{corollary} \label{cor:r=0}
  The RPA decoder described in Theorem~\ref{thm:main} can correct $\beta \le 2^{\ell}-1$ bursts, each of length up to $2^{m-r-\ell}$, using $2\beta+1$ subspaces at each level of the recursion, provided that the total number of errors is less than $2^{m-r-1}$. The complexity of this decoder is $O\big((2\beta+1)^r \cdot n\big)$.
\end{corollary}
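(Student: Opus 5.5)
The corollary follows from Theorem~\ref{thm:main} once we check its hypotheses. The error count condition $t<2^{m-r-1}=d_{\mathrm{min}}/2$ is assumed directly. The remaining condition is the structural one: $\pi(\gamma)\le\beta$ for every $\gamma\in\F_2^{m-r-\ell}$. After that we only need the complexity count.

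The key step is to show that a single burst contributes at most one error to each class $\{e: e\langle m-r-\ell\rangle=\gamma\}$. Identify each position $z\in\F_2^m$ with the integer whose binary expansion is $z_1z_2\cdots z_m$, most significant bit first. The lexicographic order on $\F_2^m$ then coincides with the natural order on $\{0,\ldots,2^m-1\}$. In particular, the suffix $z\langle m-r-\ell\rangle$ is exactly the residue of that integer modulo $2^{m-r-\ell}$.

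A burst of length at most $2^{m-r-\ell}$ occupies a set of consecutive integers. Consecutive integers in a window of that size are pairwise incongruent modulo $2^{m-r-\ell}$. Hence the error positions inside one burst have pairwise distinct suffixes $e\langle m-r-\ell\rangle$, so each burst adds at most $1$ to each $\pi(\gamma)$. Summing over the $\beta$ bursts gives $\pi(\gamma)\le\beta$ for all $\gamma$. This holds even if bursts overlap, since overlapping only reduces the count.

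With $\beta\le 2^\ell-1$, Theorem~\ref{thm:main} now applies and gives correct decoding with $2\beta+1$ subspaces per level. Note that $2\beta+1\le 2^{\ell+1}-1$ valid subspaces exist at every level, as counted before the theorem.

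For the complexity, let $C_i$ denote the cost at level $i$, where the code has length $n/2^i$. Each of the $2\beta+1$ branches requires one projection and one aggregation, each costing $O(n/2^i)$, followed by a recursive call. This gives the recursion $C_i=(2\beta+1)C_{i+1}+O((2\beta+1)n/2^i)$. The base level $r$ uses majority decoding at cost $O(n/2^r)$.

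Unrolling the recursion, the total cost is at most $O\big(\sum_{i\le r}(2\beta+1)^{i+1}n/2^{i}\big)$. This is dominated by $O((2\beta+1)^r n)$, since $(2\beta+1)/2\ge 3/2>1$ makes the last term the largest.

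The only subtle step is the identification of lexicographic order with integer order, which makes the suffix equal to the residue modulo $2^{m-r-\ell}$. This is immediate from the definition of the order.
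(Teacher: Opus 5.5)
Your proposal is correct and follows the paper's route: positions in a window of at most $2^{m-r-\ell}$ consecutive coordinates have pairwise distinct suffixes $e\langle m-r-\ell\rangle$, so each burst adds at most one to every $\pi(\gamma)$, which gives $\pi(\gamma)\le\beta$ and lets Theorem~\ref{thm:main} apply; your integer/residue identification makes the paper's one-line remark precise. Your complexity recursion is more explicit than the paper's bare assertion, with one slip: the base level contributes $(2\beta+1)^r\cdot O(n/2^r)$, not the $(2\beta+1)^{r+1}n/2^r$ that your sum over $i\le r$ includes, and once that is corrected the geometric sum is $O\big((2\beta+1)^r n/2^{r-1}\big)$, which yields the stated bound without needing $\beta$ to be constant.
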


Setting $\ell=1$ above, we see that we can correctly decode a single burst of length up to $2^{m-r-1}-1$ using only three one-dimensional subspaces at each level of recursion.

Let us take an example. Consider the code $\mathrm{RM}(5,2)$, with blocklength $2^m=32$ and minimum distance $d_{\mathrm{min}}=2^{m-r}=8$. Hence, any error pattern of weight less than $d_{\mathrm{min}}/2=4$ can be corrected. Consider a burst error pattern with $E=\{01111,10000,10001\}$ of length $3$. The last two coordinates of all error positions are different. We show that these errors can be corrected by the RPA decoder using three one-dimensional subspaces. We select the following three one-dimensional subspaces for the first level of projection: $\B^1=\{00000,00100\}$, $\B^2=\{00000,01100\}$, and $\B^3=\{00000,10000\}$. For each erroneous position, Table~\ref{tab:error_coset} shows the other positions paired with it under the corresponding one-dimensional subspace. 

\vspace{-2mm}
\begin{table}[h]
\centering
\caption{Positions paired with the three erroneous positions under the three selected subspaces.}
\label{tab:error_coset}
\begin{tabular}{c|ccc}
\textbf{Error position} & $\B^1$ & $\B^2$ & $\B^3$ \\ \hline
$\mathbf{01111}$ & $01011$ & $00011$ & $11111$ \\
$\mathbf{10000}$ & $10100$ & $11100$ & $00000$ \\
$\mathbf{10001}$ & $10101$ & $11101$ & $00001$ 
\end{tabular}
\end{table}

The three selected subspaces have nonzero vectors whose last two coordinates are zero. Hence, each coset generated by these subspaces contains positions having the same last two coordinates. Since the three error positions have distinct last two coordinates, the positions paired with different error positions are disjoint. This can be verified from Table~\ref{tab:error_coset}. Thus, each position is paired or forms coset with at most one erroneous position across the three selected subspaces. Therefore, in the aggregation step, using majority voting, all bits are decoded correctly provided that all projections are decoded correctly.

Now, the errors in $y$ at positions $01111,10000,10001$ appear in the projection $y_{/\B^1}$ at positions $0111,1000,1001$, in $y_{/\B^2}$ at positions $0011,1000,1001$, and in $y_{/\B^3}$ at positions $1111,0000,0001$, respectively. 
Thus, in each of the three projections, the last two coordinates of all error positions are different. Moreover, for each projection, the number of errors is $3<d_{\mathrm{min}}/2$, where the projected code is $\mathrm{RM}(4,1)$. We do not show the complete decoding process here; each projection can be decoded correctly by further projecting it to $\mathrm{RM}(3,0)$. Hence, all three projections can be decoded correctly, and the correctly decoded projections can then be used in the aggregation step to recover the transmitted codeword.  \\

The version of the RPA decoder we have used so far uses $r=0$ as the base case. For constant $\beta$, it has lower complexity than the version of the RPA decoder that uses $r=1$ as the base case. But as seen above, if we use $r=1$ as the base case and use FHT decoder to decode the quotient code $\mathrm{RM}(m-r+1,1)$, then we can choose a larger value of $\beta$, i.e., $\beta \leq 2^{\ell+1}-1$, which allows our framework to cover more error patterns. 

\begin{theorem} \label{thm:r=1}
  For a code $\mathrm{RM}(m,r)$, suppose the number of errors is $t < \frac{d_{\mathrm{min}}}{2}$. Let $\pi(\gamma)= \#\{e \in E: e \langle m-r-\ell \rangle= \gamma \}$ for some fixed $\ell \in \{1,2,\ldots,m-r\}$ and $\gamma \in \F_2^{m-r-\ell}$. If $\pi(\gamma) \leq \beta$ holds for all $\gamma\in\F_2^{m-r-\ell}$, where $\beta \le 2^{\ell+1}-1$ is some fixed integer, then these errors can be corrected by the RPA decoder using $2\beta+1$ one-dimensional subspaces at each level of the recursion, with $r=1$ as the base case.
\end{theorem}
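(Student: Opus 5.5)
The plan is to copy the two-property induction from the proof of Theorem~\ref{thm:main}, with the recursion now stopping at level $r-1$ instead of level $r$. Throughout, every subspace $\B=\{0,b\}$ used at every level has the last $m-r-\ell$ coordinates of $b$ equal to zero. First I check that enough such subspaces exist. At level $i$ the ambient space is $\F_2^{m-i}$, and there are $2^{r+\ell-i}-1$ valid nonzero vectors. The deepest projection step goes from level $r-2$ to level $r-1$, that is, from $\mathrm{RM}(m-r+2,2)$ to $\mathrm{RM}(m-r+1,1)$. It draws its subspaces from $\F_2^{m-r+2}$, where $2^{\ell+2}-1$ valid subspaces are available. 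So $2\beta+1\le 2^{\ell+2}-1$ distinct subspaces can be chosen at every level precisely when $\beta\le 2^{\ell+1}-1$, which is the hypothesis.

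\emph{Property~1} (forward induction): for every valid path and every $i\in\{0,\ldots,r-1\}$, we have $t_i<d_{\mathrm{min}}/2$ and $\pi^{(i)}(\gamma)\le\beta$ for all $\gamma$. This goes through verbatim as in Theorem~\ref{thm:main}. Each error lies in exactly one coset, so $t_{i+1}\le t_i$. By Lemma~\ref{lem:projvector}(ii), projection preserves the last $m-r-\ell$ coordinates of error positions, so $\pi^{(i+1)}(\gamma)\le\pi^{(i)}(\gamma)$. The minimum distance $2^{m-r}$ is the same at every level.

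\emph{Property~2} (backward induction): $\widehat{y}^{(i)}=c^{(i)}$ for all $i\in\{0,\ldots,r-1\}$ along every valid path.
\begin{itemize}
\item[] \textbf{Base case.} Level $r-1$ carries the first-order code $\mathrm{RM}(m-r+1,1)$, which has minimum distance $2^{m-r}$. By Property~1 it has $t_{r-1}<2^{m-r-1}$ errors. The FHT decoder is an ML decoder, and with fewer than $d_{\mathrm{min}}/2$ errors the transmitted codeword is the unique closest codeword. Hence the FHT output equals $c^{(r-1)}$.
\item[] \textbf{Inductive step.} This is identical to Theorem~\ref{thm:main}. A position $z$ at level $i-1$ receives a correct vote from $\B^j$ whenever $z\oplus b^j\notin E^{(i-1)}$. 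By Lemma~\ref{lem:projvector}(i), every partner $z\oplus b^j$ has the same last $m-r-\ell$ coordinates $\gamma_z$ as $z$. The $b^j$ are distinct nonzero vectors, so the partners are distinct positions. Property~1 then gives at most $\pi^{(i-1)}(\gamma_z)\le\beta$ erroneous partners. The other $2\beta+1-\beta=\beta+1$ votes are correct, so the majority is correct.
\end{itemize}

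Two points need care in the write-up. The first is that the base-case correctness of the FHT relies only on $t_{r-1}<d_{\mathrm{min}}/2$, not on the $\pi$ structure. The second, the main obstacle, is the counting of available subspaces at the last projection level. It must be made precise that the binding constraint is at the projection into $\mathrm{RM}(m-r+1,1)$, not into $\mathrm{RM}(m-r,0)$. This is exactly what enlarges the admissible range from $2^\ell-1$ to $2^{\ell+1}-1$. If $r=1$, there is no projection at all and the FHT decodes directly, so the claim holds trivially.
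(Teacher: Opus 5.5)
Your proposal is correct and follows the paper's route: the same two-property induction as in Theorem~\ref{thm:main}, with the FHT serving as the base case at level $r-1$ because it is an ML decoder and $t_{r-1}<d_{\mathrm{min}}/2$, and with the count of $2^{\ell+2}-1$ valid subspaces at the projection $\mathrm{RM}(m-r+2,2)\to\mathrm{RM}(m-r+1,1)$ giving $\beta\le 2^{\ell+1}-1$. You also state explicitly that the partners $z\oplus b^j$ are distinct because the $b^j$ are distinct, a point the paper leaves implicit.
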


We do not provide the complete proof of Theorem~\ref{thm:r=1}, as it is along the same lines as the proof of Theorem~\ref{thm:main}. The two properties hold in this case as well; the only difference is that, for the proof of Property~2, the base case is the $(r-1)$-th level of projection, corresponding to the quotient code $\mathrm{RM}(m-r+1,1)$. From Property~1, the number of errors at the $(r-1)$-th level is less than $\frac{d_{\mathrm{min}}}{2}=2^{m-r-1}$. Note that $d_{\mathrm{min}}$ is the same for all quotient codes $\mathrm{RM}(m-i,r-i)$ for all $i \in \{1,\ldots,r-1\}$. Thus, the FHT decoder corrects all the errors since it is a maximum-likelihood decoder.

The complexity of the decoder described in Theorem~\ref{thm:r=1} is $O \big( (2\beta+1)^{r-1} \cdot n \log n \big)$. At the expense of a slight increase in complexity, this decoder can handle a larger number of bursts than the one in Theorem~\ref{thm:main}, as can be seen by comparing the corollary below with Corollary~\ref{cor:r=0}.

\begin{corollary} \label{cor:r=1}
  The RPA decoder described in Theorem~\ref{thm:r=1} can correct $\beta \le 2^{\ell+1}-1$ bursts, each of length up to $2^{m-r-\ell}$, using $2\beta+1$ subspaces at each level of the recursion, provided that the total number of errors is less than $2^{m-r-1}$. The complexity of this decoder is $O\big((2\beta+1)^{r-1} \cdot n \log n \big)$.
\end{corollary}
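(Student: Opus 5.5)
The plan is to derive Corollary~\ref{cor:r=1} directly from Theorem~\ref{thm:r=1}, in the same way Corollary~\ref{cor:r=0} follows from Theorem~\ref{thm:main}. The task is to show that an error pattern made of $\beta \le 2^{\ell+1}-1$ bursts, each of length at most $2^{m-r-\ell}$, with total weight $t < 2^{m-r-1} = d_{\mathrm{min}}/2$, satisfies $\pi(\gamma)\le\beta$ for every $\gamma\in\F_2^{m-r-\ell}$. Once that is shown, Theorem~\ref{thm:r=1} applies verbatim with the same $\ell$ and $\beta$.

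\textbf{Step 1: a single burst contributes at most one to each $\pi(\gamma)$.} A burst of length at most $2^{m-r-\ell}$ is contained in a window of at most $2^{m-r-\ell}$ consecutive positions in the lexicographic order on $\F_2^m$. Identify $z\in\F_2^m$ with the integer $\sum_j z_j 2^{m-j}$. Lexicographic order is then the natural integer order, and $z\langle m-r-\ell\rangle$ is the residue of that integer modulo $2^{m-r-\ell}$. Any $2^{m-r-\ell}$ consecutive integers have pairwise distinct residues modulo $2^{m-r-\ell}$. Hence the error positions within one burst have pairwise distinct last $m-r-\ell$ coordinates, so each burst contributes at most one element to any fixed class $\{e: e\langle m-r-\ell\rangle=\gamma\}$.

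\textbf{Step 2: sum over bursts.} The set $E$ is the union of the $\beta$ bursts. By Step~1, $\pi(\gamma)$ is at most the number of bursts, so $\pi(\gamma)\le\beta$ for all $\gamma$.

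\textbf{Step 3: apply the theorem and state the complexity.} The hypotheses $t<d_{\mathrm{min}}/2$ and $\beta\le 2^{\ell+1}-1$ are given. Theorem~\ref{thm:r=1} therefore gives correct decoding using $2\beta+1$ valid subspaces (those whose nonzero vector has its last $m-r-\ell$ coordinates zero) at each level. For the complexity, levels $0$ through $r-2$ branch $2\beta+1$ ways each, which yields $(2\beta+1)^{r-1}$ leaf instances of first-order codes $\mathrm{RM}(m-r+1,1)$. Each leaf is decoded by FHT in $O(n\log n)$ time. The projection and aggregation cost at each level is dominated by this, so the total is $O\big((2\beta+1)^{r-1} n\log n\big)$.

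The main obstacle is Step~1: making precise that a burst lies within $2^{m-r-\ell}$ consecutive lexicographic positions, and that such positions have distinct suffixes. This relies on the binary-integer identification described there; everything else is a direct invocation of Theorem~\ref{thm:r=1}.
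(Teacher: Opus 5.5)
Your proposal is correct and follows the paper's argument: a burst of length at most $2^{m-r-\ell}$ occupies consecutive lexicographic positions whose last $m-r-\ell$ coordinates are distinct, so each burst contributes at most one to each $\pi(\gamma)$, which gives $\pi(\gamma)\le\beta$ and lets Theorem~\ref{thm:r=1} apply directly. Your identification of $z$ with the integer $\sum_j z_j 2^{m-j}$, under which $z\langle m-r-\ell\rangle$ is the residue modulo $2^{m-r-\ell}$, makes precise a step the paper states informally, and your count of $(2\beta+1)^{r-1}$ FHT leaves supports the complexity bound that the paper only asserts.
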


For instance, setting $\ell=1$ above, we see that the decoder of Theorem~\ref{thm:r=1} can correct up to three bursts in which the total number of errors is less than $2^{m-r-1}$ using seven one-dimensional subspaces at each level of the RPA recursion. By way of comparison, the $\ell=1$ instance of Corollary~\ref{cor:r=0} shows that the decoder of Theorem~\ref{thm:main} can correct only one such burst, but it makes use of only three one-dimensional subspaces at each level of the RPA recursion.


\section{Conclusion}
In this paper, we studied the problem of correcting structured error patterns in Reed-Muller codes when the number of errors is less than half the minimum distance. We showed that, by appropriately choosing the subspaces based on the structure of the error pattern, fewer subspaces than those required for arbitrary error patterns can be used for correction. For $\beta$ bursts, we showed that $2\beta+1$ subspaces are sufficient under suitable conditions on the burst lengths. Since $\beta$ can be a constant that is much smaller than the total number of errors $t$, the decoder makes use of a constant number of subspaces at each level of the recursion, even when the number of errors is large. We considered two variants of the RPA decoder, with $r=0$ and $r=1$ as the base case. The $r=0$ base case has lower complexity for constant $\beta$, while the $r=1$ base case allows a larger value of $\beta$, and hence can correct a larger class of structured error patterns.

As future work, we would like to extend this study to error patterns where the number of errors is more than half the minimum distance. In particular, we would like to investigate whether there exist structured error patterns with $t\geq d_{\mathrm{min}}/2$ whose structure can be exploited to correct them using fewer subspaces.

\section*{Acknowledgement} This work was supported in part by a Qualcomm Innovation Fellowship awarded to the first author.

\section*{Statement of AI Usage}
During the preparation of this manuscript, the authors used ChatGPT (OpenAI) to improve the language and readability of the text. The core technical content, analysis, and conclusions were entirely generated and reviewed by the human authors, who remain fully responsible for the accuracy and integrity of the work.

\bibliographystyle{ieeetr}
\bibliography{references}

\end{document}